\newtheorem{problem}{Problem}
\newtheorem{definition}{Definition}
\newtheorem{lemma}{Lemma}
\newtheorem{theorem}{Theorem}
\newcommand{\hide}[1]{}
\newcommand{\cintlong}{Closing Interval\xspace}
\newcommand{\fintlong}{Total Interval\xspace}
\newcommand{\SG}{\mathcal{G}}
\newcommand{\SV}{\mathcal{V}}
\newcommand{\SE}{\mathcal{E}}
\newcommand{\SSS}{\mathcal{S}}
\newcommand{\SW}{\mathcal{W}}
\newcommand{\SR}{\mathcal{R}}
\newcommand{\SGT}{\mathcal{G}^{(t)}}
\newcommand{\SVT}{\mathcal{V}^{(t)}}
\newcommand{\ET}{e^{(t)}}
\newcommand{\PT}{p^{(t)}}
\newcommand{\SET}{\mathcal{E}^{(t)}}
\newcommand{\triple}{(u,v,w)}
\newcommand{\type}{type_{uvw}}
\newcommand{\tprob}{p_{uvw}}
\newcommand{\tcount}{x_{uvw}}
\newcommand{\tgen}{t^{(i)}_{uvw}}
\newcommand{\tmax}{t^{(3)}_{uvw}}
\newcommand{\tmed}{t^{(2)}_{uvw}}
\newcommand{\tmin}{t^{(1)}_{uvw}}
\newcommand{\STT}{\mathcal{T}^{(t)}}
\newcommand{\CT}{c^{(t)}}
\newcommand{\SN}{\hat{\mathcal{N}}}
\newcommand{\SGH}{\hat{\mathcal{G}}}
\newcommand{\SVH}{\hat{\mathcal{V}}}
\newcommand{\SEH}{\hat{\mathcal{E}}}
\newcommand{\bit}{\begin{itemize*}}
\newcommand{\eit}{\end{itemize*}}
\newcommand{\ben}{\begin{enumerate*}}
\newcommand{\een}{\end{enumerate*}}
\newcommand{\method}{\textsc{WRS}\xspace}
\newcommand{\mascot}{\textsc{MASCOT}\xspace}
\newcommand{\triest}{\textsc{Triest-IMPR}\xspace}
\newcommand{\bluecolor}{\textcolor{blue}}
\begin{document}

\title{WRS: Waiting Room Sampling for Accurate Triangle Counting in Real Graph Streams}



\author{\IEEEauthorblockN{Kijung Shin}
\IEEEauthorblockA{Carnegie Mellon University, Pittsburgh, PA, USA \\
Email: kijungs@cs.cmu.edu}}


\maketitle
\begin{abstract}
If we cannot store all edges in a graph stream, which edges should we store to estimate the triangle count accurately?

Counting triangles (i.e., cycles of length three) is a fundamental graph problem with many applications in social network analysis, web mining, anomaly detection, etc.
Recently, much effort has been made to accurately estimate global and local triangle counts in streaming settings with limited space. 
Although existing methods use sampling techniques without considering temporal dependencies in edges,
we observe {\em temporal locality} in real dynamic graphs.
That is, 
future edges are more likely to form triangles 
with recent edges than with older edges.

In this work, we propose a single-pass streaming algorithm called {\em Waiting-Room Sampling} (\method) for global and local triangle counting.
\method exploits the temporal locality by always storing the most recent edges, which future edges are more likely to form triangles with, in the {\em waiting room}, while it uses reservoir sampling for the remaining edges.
Our theoretical and empirical analyses show that \method is: {\em (a) Fast and `any time':} runs in linear time, always maintaining and updating estimates while new edges arrive, {\em (b) Effective}: yields up to {\em 47\% smaller estimation error} than its best competitors, and {\em (c) Theoretically sound}: gives unbiased estimates with small variances under the temporal locality.

\end{abstract}

\begin{IEEEkeywords}
	triangle counting; graph stream; edge sampling.
\end{IEEEkeywords}

\section{Introduction}
\label{sec:intro}
\begin{figure*}[t]
	\centering
	\vspace{-4mm}
	\subfigure[\method is fast and `any time']{\label{fig:crown:speed}
		\includegraphics[width=0.1825\linewidth]{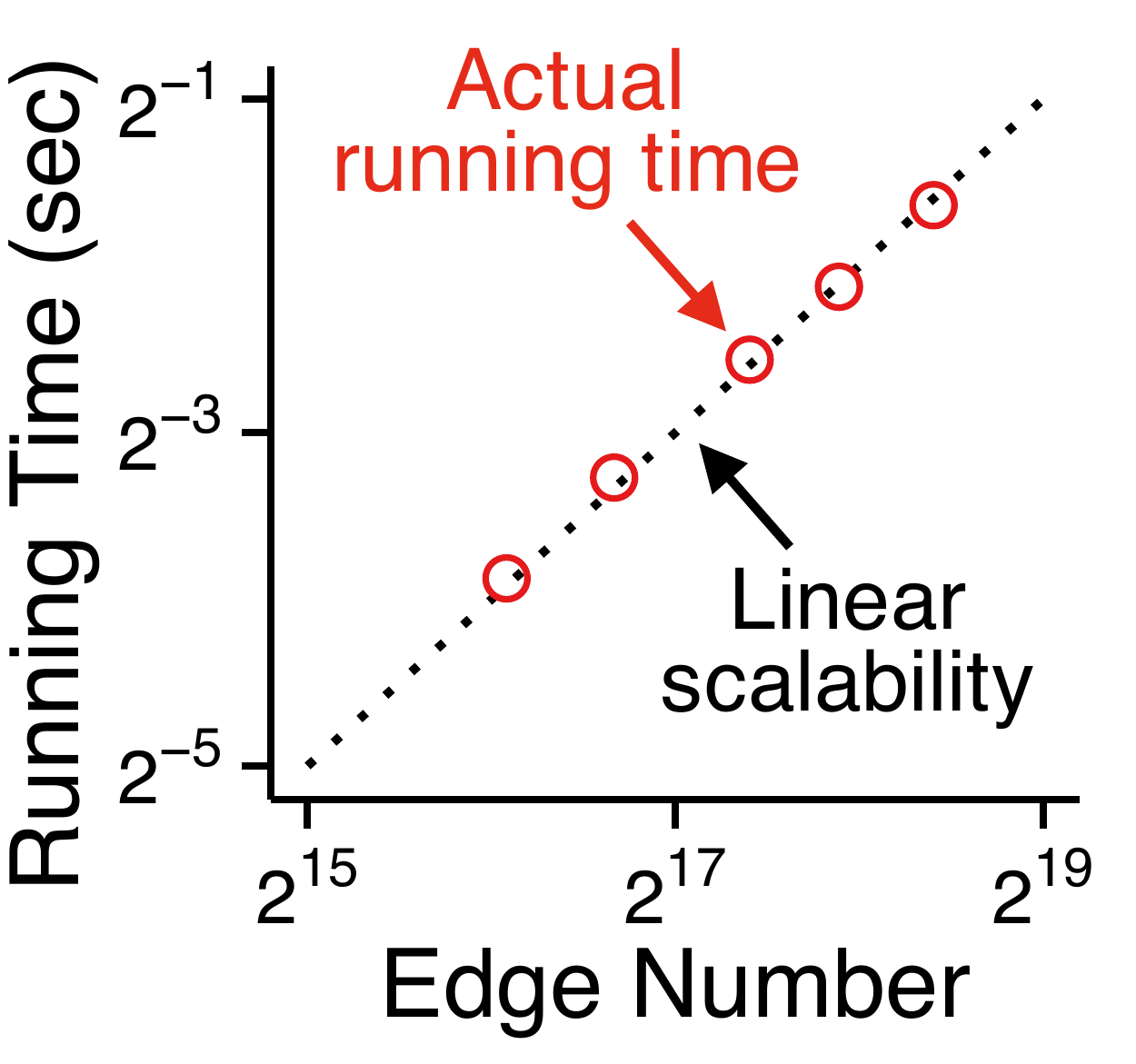}
		\includegraphics[width=0.1825\linewidth]{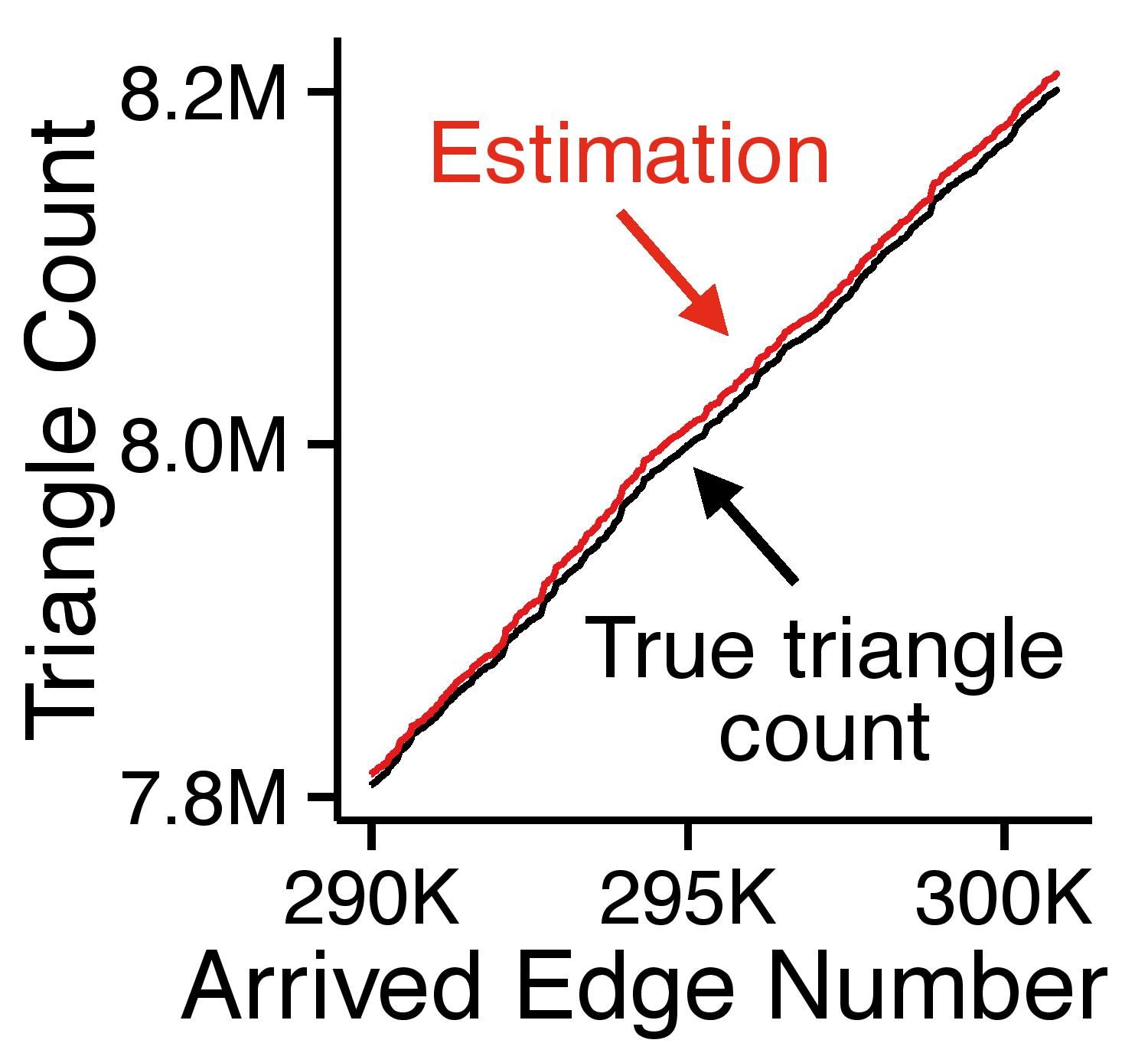}
	}
	\subfigure[\method is effective]{\label{fig:crown:accuracy}
		\includegraphics[width=0.1825\linewidth]{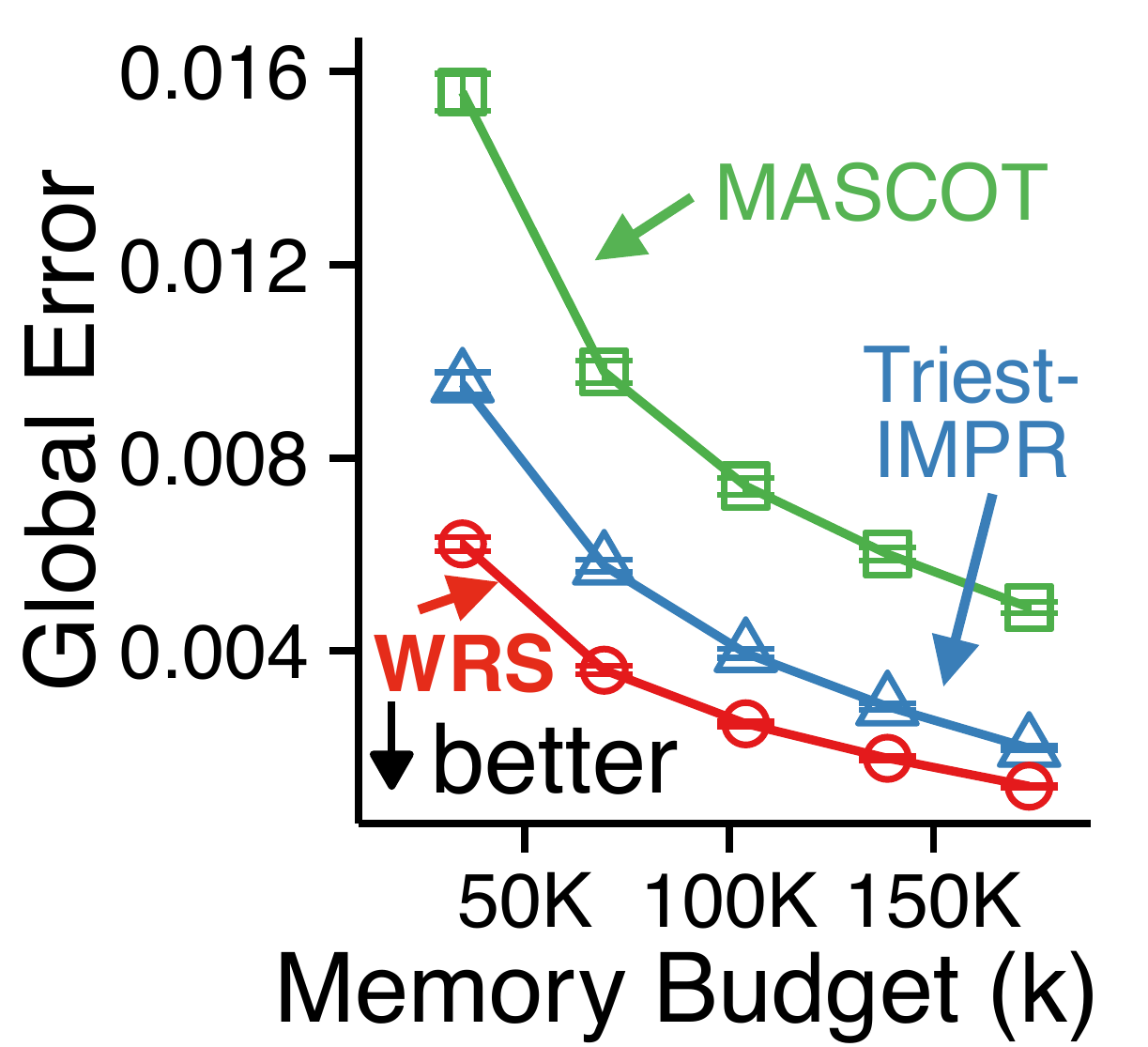}
		\includegraphics[width=0.1825\linewidth]{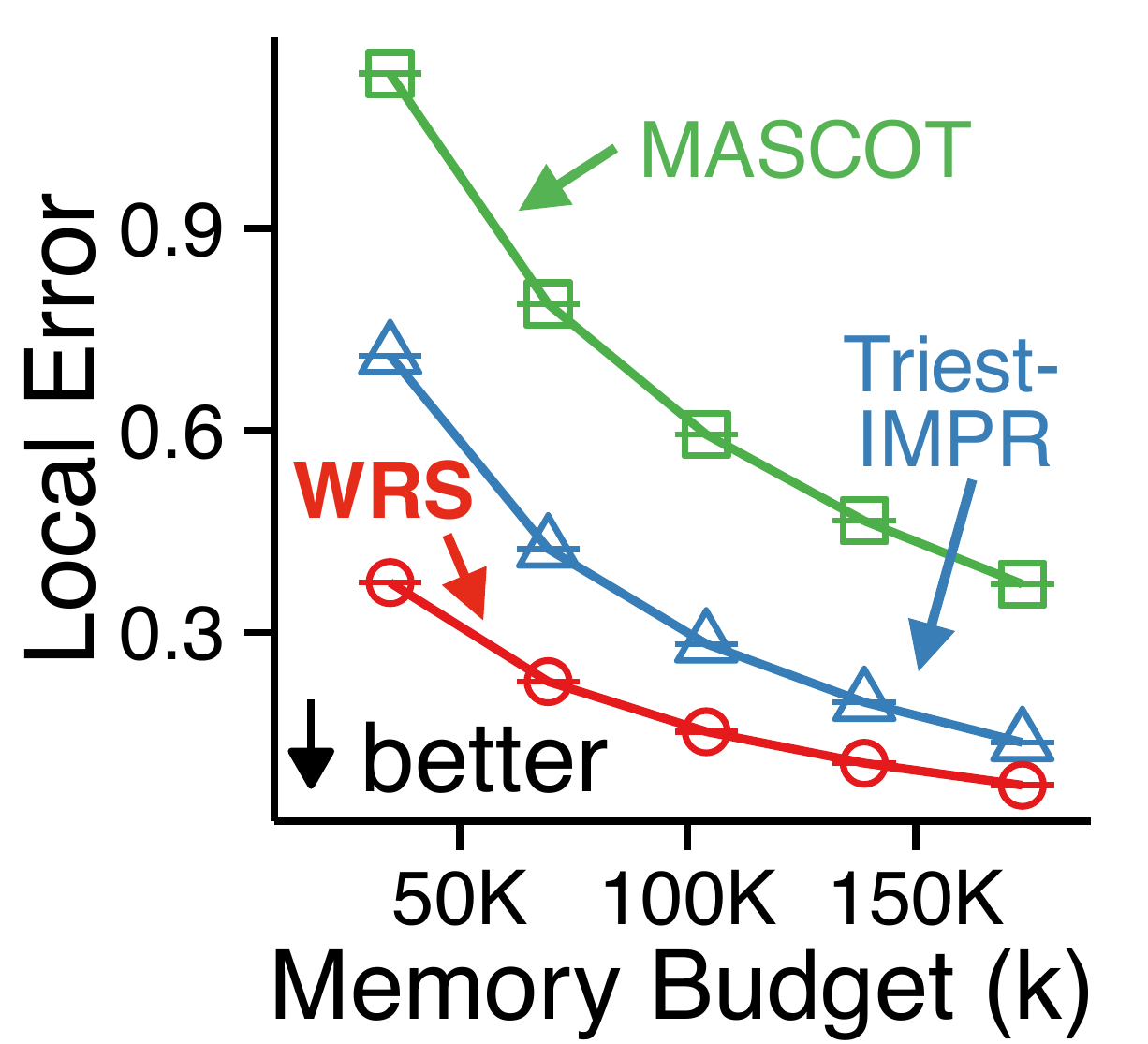}
	}
	\subfigure[\method is unbiased]{\label{fig:crown:analysis}
		\includegraphics[width=0.1825\linewidth]{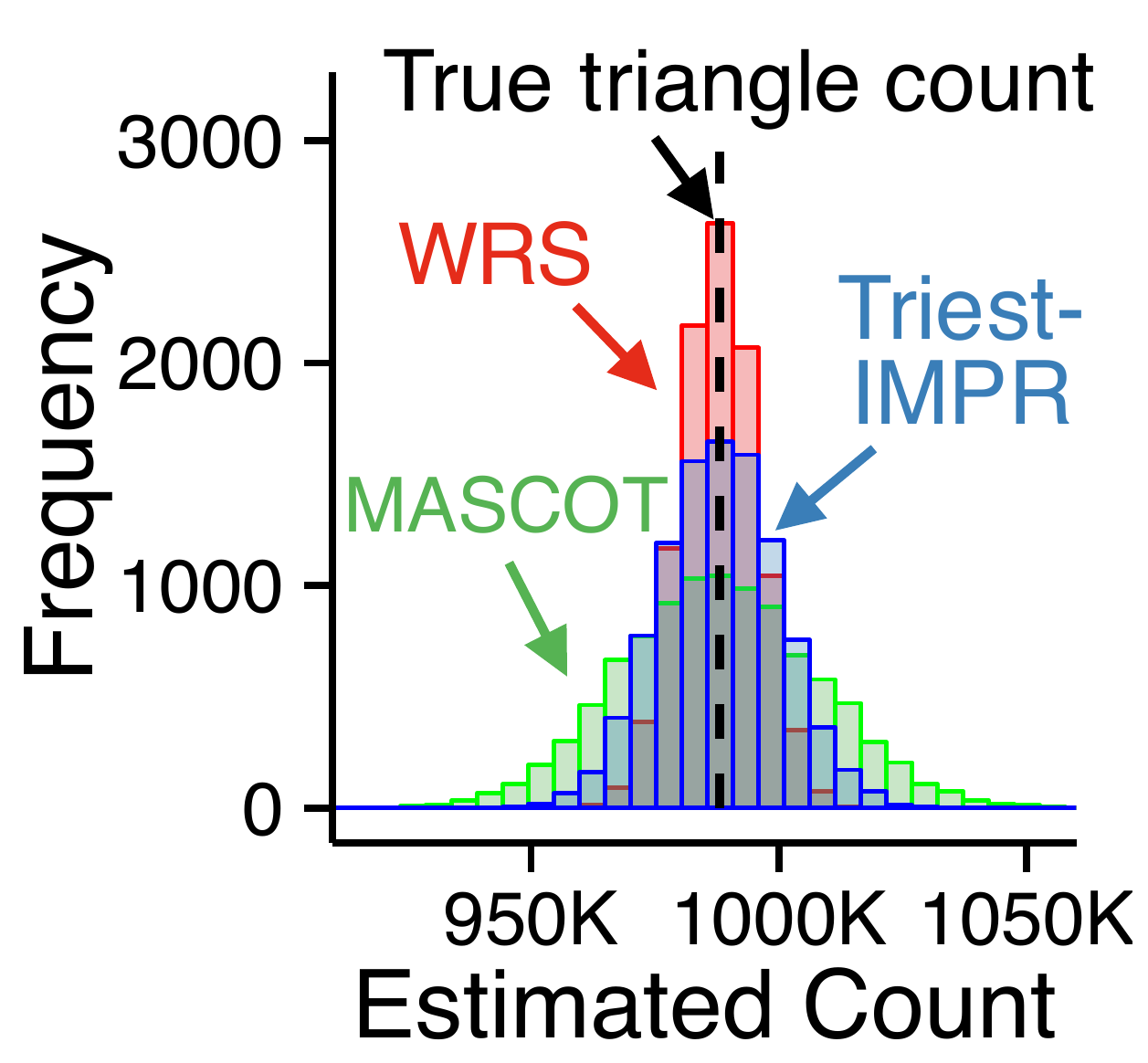}
	}\\
	\vspace{-2mm}
	\caption{\label{fig:crown}
		\underline{\smash{Strengths of \method.}} (a) \method scales linearly with the number of edges, and it always maintains the estimates of the triangle counts while input graphs grow with new edges. (b) \method is more accurate than state-of-the-art streaming algorithms in global and local triangle counting. (c) \method gives unbiased estimates (Theorem~\ref{thm:unbias}) with small variances (Lemma~\ref{lemma:variance:type}). The ArXiv dataset, explained in the experiment section, is used for all these figures.
	}
\end{figure*}

In a dynamic graph stream, where new edges are streamed as they are created, how can we count the triangles?
If we cannot store all the edges in memory, which edges should we store to estimate the triangle count accurately?

Counting the triangles (i.e., cycles of length $3$) in a graph is a fundamental problem with many applications.
For example, triangles in social networks have received much attention as an evidence of homophily (i.e., people choose friends similar to themselves) \cite{mcpherson2001birds} and transitivity (i.e., people with common friends become friends) \cite{wasserman1994social}.
Thus, many concepts in social network analysis, such as social balance \cite{wasserman1994social}, the clustering coefficient \cite{watts1998collective}, and the transitivity ratio \cite{newman2003structure}, are based on the count of triangles.
Moreover, the count of triangles has been used for spam detection \cite{becchetti2010efficient}, web-structure analysis \cite{eckmann2002curvature}, degeneracy estimation \cite{shin2016corescope}, and query optimization \cite{bar2002reductions}. 

Due to the importance of triangle counting, numerous algorithms have been developed in many different settings, including multi-core \cite{shun2015multicore,kim2014opt}, 
external-memory \cite{kim2014opt,hu2014efficient}, distributed-memory \cite{arifuzzaman2013patric}, 
and MapReduce \cite{suri2011counting,park2014mapreduce,park2016pte} 
settings.
The algorithms aim to accurately and rapidly count {\em global triangles} (i.e., all triangles in a graph) and/or {\em local triangles} (i.e., triangles that each node is involved with).

Especially, as many real graphs, including social media and web, evolve over time, recent work has focused largely on streaming settings where graphs are given as streams of new edges.
To accurately estimate the count of the triangles in large graph streams not fitting in memory, various sampling techniques have been developed \cite{kutzkov2013streaming,lim2015mascot,de2016tri}.

However, existing streaming algorithms sample edges without considering temporal dependencies in edges and thus cannot exploit {\em temporal locality}, i.e., the tendency that future edges are more likely to form triangles with recent edges than with older edges.
This temporal locality is observed commonly in many real dynamic graph streams, where new edges are streamed as they are created.

Then, how can we exploit the temporal locality for accurately estimating global and local triangle counts?
We propose {\em Waiting-Room Sampling} (\method), a single-pass streaming algorithm that always stores the most recent edges in the {\em waiting room}, while it uses standard reservoir sampling \cite{vitter1985random} for the remaining edges.
The waiting room increases the probability that, when a new edge arrives, edges forming triangles with the new edge are in memory.
Reservoir sampling, on the other hand, enables \method to yield unbiased estimates.


Specifically, our theoretical and empirical analyses show that \method has the following strengths:
\bit
	\item {\bf Fast and `any time'}:  \method runs in linear time in the number of edges, giving the estimates of triangle counts at any time, not only at the end of streams (Figure~\ref{fig:crown:speed}).
	\item {\bf Effective}: \method produces up to {\em 47\% smaller} estimation error than its best competitors (Figure~\ref{fig:crown:accuracy}).
	\item {\bf Theoretically sound}: we prove the unbiasedness of the estimators provided by \method and their small variances under the temporal locality (Theorem~\ref{thm:unbias} and Figure~\ref{fig:crown:analysis}).
\eit
{\bf Reproducibility}: The code and datasets used in the paper are available at \textit{\url{http://www.cs.cmu.edu/~kijungs/codes/wrs/}}.



In Section~\ref{sec:related}, we review related work.
In Section~\ref{sec:prelim}, we introduce notations and the problem definition. 
In Section~\ref{sec:pattern}, we discuss temporal locality in real graph streams.
In Section~\ref{sec:method}, we propose our algorithm \method.
After showing experimental results in Section~\ref{sec:experiments}, 
we draw conclusions in Section~\ref{sec:conclusion}.

\section{Related Work}
\label{sec:related}
In this section, we discuss previous work on global and local triangle counting in a graph stream with limited space.

{\bf Global triangle counting in graph streams:}
For estimating the count of global triangles (i.e., all triangles in a graph), Tsourakakis et al. \cite{tsourakakis2009doulion} proposed sampling each edge i.i.d. with probability $p$. Then, the global triangle count can be estimated simply by multiplying that in the sampled graph by $p^{-3}$.
Jha et al. \cite{jha2013space} and Pavan et al. \cite{pavan2013counting} proposed sampling wedges (i.e., paths of length $2$) instead of edges for better space efficiency.
Kutzkov and Pagh \cite{kutzkov2014triangle} combined edge and wedge sampling methods for global triangle counting in graph streams where edges can be both inserted or deleted.


{\bf Local triangle counting in graph streams:}
For estimating the count of local triangles (i.e., triangles with each node),  Lim and Kang \cite{lim2015mascot} proposed sampling each edge i.i.d with probability $p$ but updating global and local counts whenever an edge arrives, even when the edge is not sampled. To properly set $p$, however, the number of edges in input streams should be known in advance. 
Likewise, randomly coloring nodes to sample the edges connecting nodes of the same color, as suggested by Kutzkov and Pagh \cite{kutzkov2013streaming}, requires the number of nodes in advance to decide the number of colors.
De Stefani et al. \cite{de2016tri} solved this problem using reservoir sampling \cite{vitter1985random}, which fully utilizes memory space within a budget, without requiring any prior knowledge.
In addition to these single-pass streaming algorithms, semi-streaming algorithms with multiple passes over a graph were also explored \cite{becchetti2010efficient,kolountzakis2010efficient}.


Our single-pass algorithm \method estimates both global and local triangle counts in a graph stream without any prior knowledge about the input graph stream (see Section~\ref{sec:prelim:problem} for the detailed settings). Different from the existing approaches above, \method exploits the temporal locality in real graph streams (see Section~\ref{sec:pattern}), leading to higher accuracy.



\section{Preliminaries}
\label{sec:prelim}
In this section, we first introduce notations and concepts used in the paper.
Then, we define the problem of global and local triangle counting in a real dynamic graph stream.

\begin{table}[!t]
	\centering
	\caption{Table of symbols.}
	\begin{tabular}{l|l}
		\toprule
		\textbf{Symbol} & \textbf{Definition} \\
		\midrule
		\multicolumn{2}{l}{Notations for Graph Streams} \\
		\midrule
		$\SGT=(\SVT,\SET)$ & graph $\SG$ at time $t$\\
		$\ET$ & edge arriving at time $t$\\
		$(u,v)$ & edge between nodes $u$ and $v$ \\
		$t_{uv}$ & arrival time of edge $(u,v)$\\
		$\triple$ & triangle with nodes $u$, $v$, and $w$ \\
		$\tgen$ & arrival time of the $i$-th edge in $\triple$ \\
		$\STT$ & set of triangles in $\SG^{(t)}$\\
		$\STT_{u}$ & set of triangles with node $u$ in $\SG^{(t)}$\\
		\midrule
		\multicolumn{2}{l}{Notations for Our Algorithm (defined in Section~\ref{sec:method})} \\
		\midrule
		$\SSS$ & given memory space \\
		$\SW$ & waiting room \\
		$\SR$ & reservoir \\
		$k$ & maximum number of edges stored in $\SSS$ \\
		$\alpha$ & relative size of the waiting room (i.e., ${|\SW|}/{|\SSS|}$) \\
		$\SGH=(\SVH,\SEH)$ & graph composed of the edges in $\SSS$ \\
		$\SN_{u}$ & set of neighbors of node $u$ in $\SGH$\\
		\bottomrule
	\end{tabular}
	\label{tab:symbols}
\end{table}

\subsection{Notations and Concepts}
\label{sec:prelim:notation}

Symbols frequently used in the paper are listed in Table~\ref{tab:symbols}.
Consider an undirected graph $\SG=(\SV,\SE)$ with the set of nodes $\SV$ and the set of edges $\SE$.
We use unordered pair $(u,v)\in \SE$ to indicate the edge between nodes $u\in\SV$ and $v\in\SV$.  
The graph $\SG=(\SV,\SE)$ grows over time; and we let $\ET$ be the edge arriving at time $t\in\{1,2,...\}$ and $t_{uv}$ be the arrival time of edge $(u,v)$
(i.e., $\ET=(u,v)\Leftrightarrow t_{uv}=t$).
Then, we denote $\SG$ at time $t$ by $\SGT=(\SVT,\SET)$, which consists of the nodes and edges arriving at time $t$ or earlier.
Let unordered triple $\triple$ be the triangle (i.e., cycle of length $3$) with edges $(u,v)$, $(v,w)$, and $(w,u)$. 
We use $\tmin$:=$\min\{t_{uv},t_{vw},t_{wu}\}$, $\tmed$:=median$\{t_{uv},t_{vw},t_{wu}\}$, and $\tmax$:=$\max\{t_{uv},t_{vw},t_{wu}\}$ to indicate the arrival times of the first, second, and last edges, resp., in $\triple$.
We denote the set of triangles in $\SGT$ by $\STT$ and the set of triangles with node $u$ by $\STT_{u}\subset\STT$.
We call $\STT$ \textit{global triangles} and $\STT_{u}$ \textit{local triangles} of node $u$.

%
%

\subsection{Problem Definition}
\label{sec:prelim:problem}

In this work, we consider the problem of counting the global and local triangles in a graph stream assuming the following realistic conditions:
\begin{enumerate*}
	\item[C1] \textbf{No Knowledge:} no information about the input stream (e.g., the node count, the edge count, etc) is available. 
	\item[C2] \textbf{Real Dynamic:} in the input stream, new edges arrive in the order by which they are created.
	\item[C3] \textbf{Limited Memory Budget:}  we store at most $k$ edges in memory. 
	\item[C4] \textbf{Single Pass:} edges are processed one by one in their order of arrival. Past edges cannot be accessed unless they are stored in memory (within the budget stated in C3).
\end{enumerate*}

Based on these conditions, we define the problem of global and local triangle counting in a real dynamic graph stream in Problem~\ref{defn:problem:merged}. 

\begin{problem}[Global and Local Triangle Counting in a Real Dynamic Graph Stream]\label{defn:problem:merged} \ 
	\begin{enumerate}
		\item \textbf{Given:} a real dynamic graph stream $\{e^{(1)},e^{(2)},...\}$ and a memory budget $k$,
		\item \textbf{Find:} sampling and triangle counting algorithms,
		\item \textbf{to Minimize:} the estimation error of global triangle count $|\STT|$ and local triangle counts $\{|\STT_{u}|\}_{u\in\SVT}$ for each time $t\in\{1,2,...\}$.
	\end{enumerate}
\end{problem}

%

Instead of minimizing a specific measure of estimation error, we follow a general approach of reducing both bias and variance, which is robust to many measures of estimation error.

%
%

\section{Empirical Pattern: Temporal Locality}
\label{sec:pattern}

In this section, we discuss {\em temporal locality} (i.e., the tendency that future edges are more likely to form triangles with recent edges than with older edges) in real graph streams.
To show the temporal locality, we investigate the distribution of {\it closing intervals} and {\it total intervals}, defined below.
\begin{definition}[Closing Interval]\label{defn:interval:closing}
	The \textbf{closing interval} of a triangle is defined as the time interval between the arrivals of the second edge and the last edge. That is,
	\vspace{-1.5mm}
	\small
	\begin{equation*}
	closing\_interval(\triple):= \tmax-\tmed.
	\end{equation*}\normalfont
\end{definition}
\begin{definition}[Total Interval]\label{defn:interval:total}
	The \textbf{total interval} of a triangle is defined as the time interval between the arrivals of the first edge and the last edge. That is,
	\vspace{-1.5mm}
	\small
	\begin{equation*}
	total\_interval(\triple):= \tmax-\tmin.
	\end{equation*}\normalfont
\end{definition}
Figure~\ref{fig:pattern} shows the distributions of the closing and total intervals in real dynamic graph streams (see Section~D of \cite{supple} for the descriptions of them) and those in random graph streams obtained by randomly shuffling the orders of the edges in the corresponding real streams.
In every dataset, both intervals tend to be much shorter in the real stream than in the random one.
That is, future edges do not form triangles with all previous edges with equal probability but they are more likely to form triangles with recent edges than with older edges.

Then, why does the temporal locality exist? It is related to {\em transitivity} \cite{wasserman1994social}, i.e., the tendency that people with common friends become friends. When an edge $(u,v)$ arrives, we can expect that edges connecting $u$ and other neighbors of $v$ or connecting $v$ and other neighbors of $u$ will arrive soon.
These future edges form triangles with the edge $(u,v)$.
The temporal locality is also related to new nodes.
For example, in citation networks, when a new node arrives (i.e., a paper is published), many edges incident to the node (i.e., citations of the paper), which are likely to form triangles with each other, are created almost instantly.
Likewise, in social media, new users make many connections within a short time
by importing their friends from other social media or their address books.

\begin{figure}[t]
	\vspace{-3mm}
	\centering
	\includegraphics[width=0.5\linewidth]{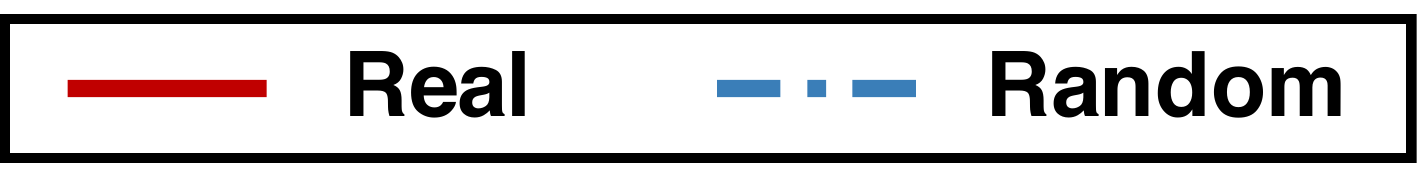} \\
	\vspace{-1.5mm}
	\subfigure[\cintlong Distribution (ArXiv)]{
		\includegraphics[width=0.29\linewidth]{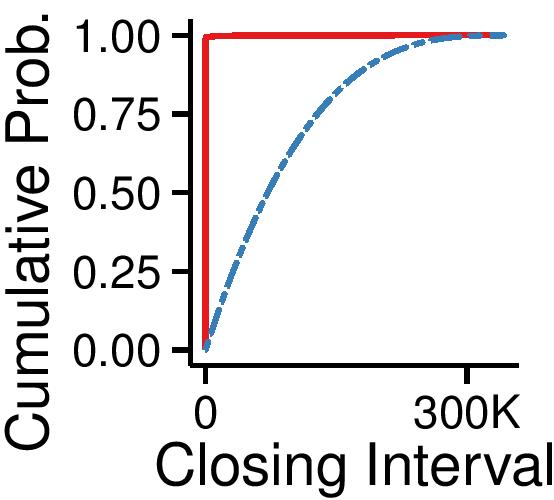}
	}
	\subfigure[\cintlong Distribution (Email)]{
		\includegraphics[width=0.29\linewidth]{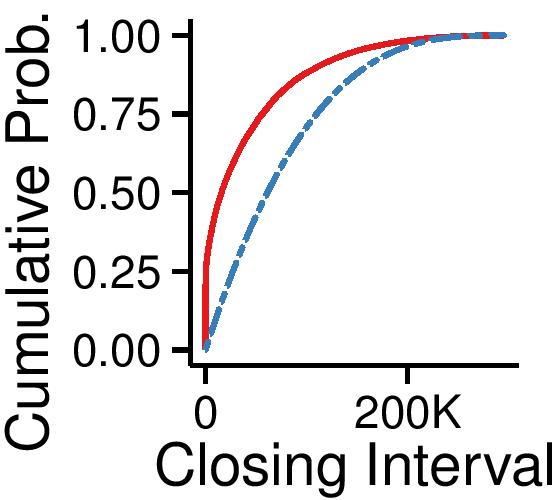}
	}
	\subfigure[\cintlong Distribution (Facebook)]{
		\includegraphics[width=0.29\linewidth]{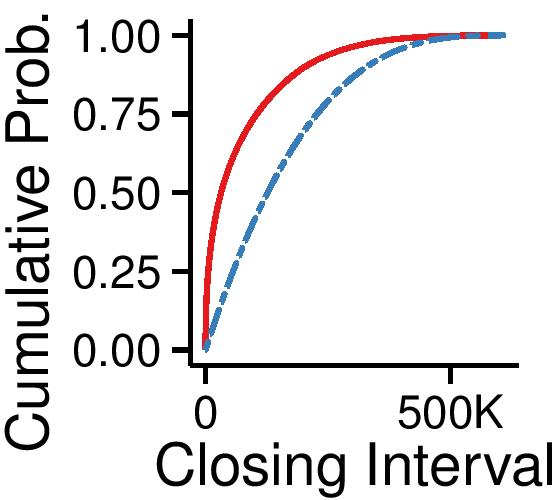}
	}\\
	\vspace{-2mm}
	\subfigure[\fintlong Distribution (ArXiv)]{
		\includegraphics[width=0.29\linewidth]{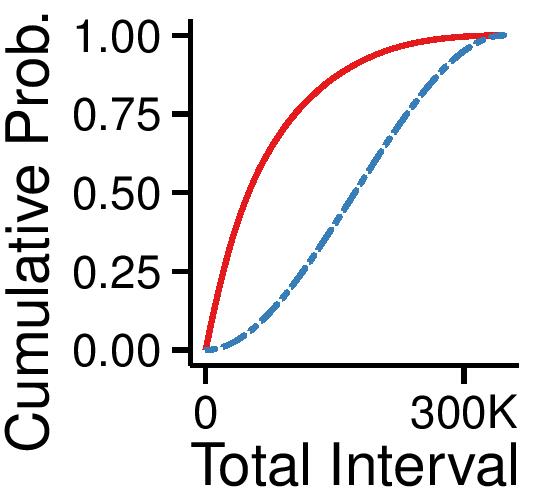}
	}
	\subfigure[\fintlong Distribution (Email)]{
		\includegraphics[width=0.29\linewidth]{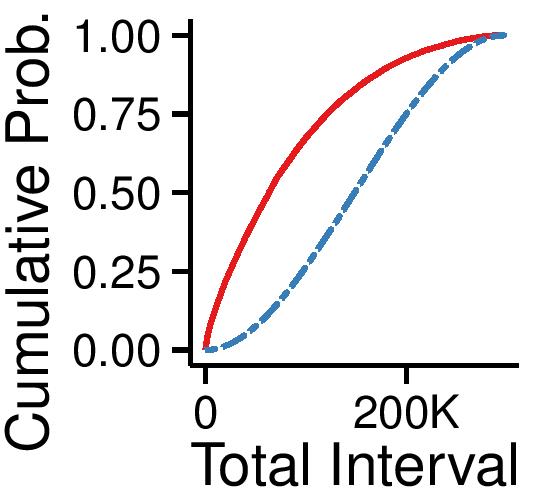}
	}
	\subfigure[\fintlong Distribution (Facebook)]{
		\includegraphics[width=0.29\linewidth]{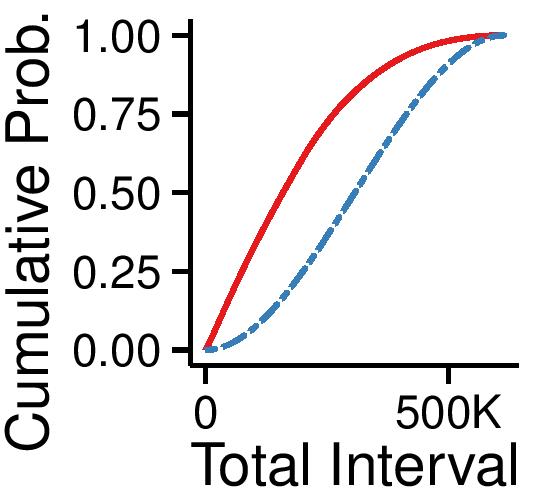}
	}\\
	\vspace{-2mm}
	\caption{\label{fig:pattern}
		{\bf Temporal locality in triangle formation.} Closing and total intervals tend to be shorter in real graph streams than in random ones, i.e., future edges are more likely to form triangles with recent edges than with older edges.
	}
\end{figure}

\section{Proposed Method: Waiting-Room Sampling}
\label{sec:method}
In this section, we propose {\em Waiting-Room Sampling} (\method), a single-pass streaming algorithm that exploits the temporal locality, presented in the previous section, for accurate global and local triangle counting. 
We first discuss the intuition behind \method. Then, we explain the details of \method.
Lastly, we give theoretical analyses of its accuracy. 

\subsection{Intuition behind \method}
\label{sec:method:overview}


To minimize the estimation error of triangle counts, we minimize both the bias and variance of estimates.
Reducing the variance is related to finding more triangles because, intuitively speaking, knowing more triangles is helpful to accurately estimate their count.
This relation is more formally analyzed in Section~\ref{sec:method:analysis}.
Thus, the following two goals should be considered when we decide which edges to store in memory:
\bit
	\item {\bf Goal 1.} unbiased estimates of global and local triangle counts should be computed from the stored edges.
	\item {\bf Goal 2.} when a new edge arrives, it should form many triangles with the stored edges. 
\eit

Uniform random sampling, such as reservoir sampling, achieves Goal~1 but fails to achieve Goal~2 ignoring the temporal locality, explained in Section~\ref{sec:pattern}.
Storing the latest edges, while discarding the older ones, can be helpful to achieve Goal~2, as suggested by the temporal locality. However, simply discarding old edges makes unbiased estimation non-trivial. 

To achieve both goals, \method combines the two policies above.
Specifically, it divides the memory space into the {\em waiting room} and the {\em reservoir}. The most recent edges are always stored in the waiting room, while the remaining edges are uniformly sampled in the reservoir using standard reservoir sampling. 
The waiting room enables us to achieve Goal~2 since it exploits the temporal locality by storing the latest edges, which future edges are more likely to form triangles with.
On the other hand, the reservoir enables us to achieve Goal~1, as explained in detail in the following sections.


\subsection{Detailed Algorithm}
\label{sec:method:algorithm}

We first explain the sampling policy of \method. Then, we explain how to estimate the triangle counts from sampled edges. The pseudo code of \method is given in Algorithm~\ref{alg:method}.

\subsubsection{Sampling Policy (Lines  \ref{alg:method:sample:start}-\ref{alg:method:sample:end} of Algorithm~\ref{alg:method})}
\label{sec:method:algorithm:sampling}
Let $\SSS$ be the given memory space, where at most $k$ edges are stored.
Let $\ET=(u,v)$ be the edge arriving at time $t\in\{1,2,...\}$. 
The sampling method in \method depends on $t$ as follows (see Section~A of \cite{supple} for a pictorial description):

{\bf (Case 1).} If $t\leq k$, add $\ET$ to $\SSS$, which is not full yet.

{\bf (Case 2).} If $t=k+1$, since $\SSS$ is full, divide $\SSS$ into the waiting room $\SW$ and the reservoir $\SR$ so that the latest $k\alpha$ edges (i.e., $\{e^{(k-k\alpha+1)},...,e^{(k)}\}$) are in $\SW$ and the remaining $k(1-\alpha)$ edges (i.e., $\{e^{(1)},...,e^{(k(1-\alpha))}\}$) are in $\SR$.
The constant $\alpha$ is the relative size of the waiting room.
For simplicity, we assume $k\alpha$ and $k(1-\alpha)$ are integers.
Then, go to (Case 3).

{\bf (Case 3).} If $t\geq k+1$, $e^{(t- k\alpha)}$,  which is the oldest edge in $\SW$, is replaced with $\ET$ (i.e., $\SW$ is a queue with the ``first in first out'' mechanism).
Then, with probability $\PT$, where 
\small\begin{equation} 
	\PT:={k(1-\alpha)}/{(t-k\alpha)}, \label{eq:pt} 
\end{equation}\normalsize
$e^{(t-k\alpha)}$ replaces a randomly chosen edge in $\SR$. Otherwise, $e^{(t-k\alpha)}$ is discarded.
That is, standard reservoir sampling \cite{vitter1985random} is used in $\SR$, which ensures that each edge in $\{e^{(1)},...,e^{(t-k\alpha)}\}$ is stored in $\SR$ with equal probability $\PT$.

In summary, when $\ET$ arrives (or after $e^{(t-1)}$ is processed), if $t\leq k+1$, then each edge in $\{e^{(1)},...,e^{(t-1)}\}$ is stored in $\SSS$ with probability $1$.
If $t>k+1$, then each edge in $\{e^{(t-k\alpha)},...,e^{(t-1)}\}$ is stored in $\SSS$ (specifically in $\SW$) with probability $1$, while each edge in $\{e^{(1)},...,e^{(t-k\alpha-1)}\}$ is stored in $\SSS$ (specifically in $\SR$) with probability $p^{(t-1)}$.

%

\begin{algorithm}[t]
	\small
	\vspace{-1mm}
	\caption{Waiting Room Sampling (\method)}\label{alg:method}
	\begin{algorithmic}[1] 
		\Require (1) graph stream: $\{e^{(1)},e^{(2)},...\}$, (2) memory budget: $k$,
		\Statex {\scriptsize \ \ \ \ } (3) relative size of the waiting room: $\alpha$
		\Ensure (1) estimated global triangle count: $c$,
		\Statex	{\scriptsize \ \ \ \ \ \ \ } (2) estimated local triangle counts: $c_{u}$ for each node $u$
		
		\For{each new edge $e^{(t)}=(u,v)$}
		
		\For{each node $w$ in $\SN_{u}\cap \SN_{v}$ \label{alg:method:count:start}} 
		\State initialize $c$, $c_{u}$, $c_{v}$, and $c_{w}$ to $0$ if they have not been set
		\State increase $c$, $c_{u}$, $c_{v}$, and $c_{w}$ by $1/\tprob$ 
		\EndFor \label{alg:method:count:end}
		
		\If{$t \leq  k$} add $e^{(t)}$ to $\SSS$ \label{alg:method:sample:start} \hfill \bluecolor{$\vartriangleright$ (Case 1)} 
		\Else
		\If{$t=k+1$} 
		\vspace{-3.8mm} \Statex \hfill \bluecolor{$\vartriangleright$ (Case 2)}
		\State divide $\SSS$ into $\SW$ and $\SR$ as explained in Section~\ref{sec:method:algorithm:sampling}
		\EndIf 
		\State remove $e^{(t- k\alpha)}$ from $\SW$ and add $e^{(t)}$ to $\SW$ \hfill \bluecolor{$\vartriangleright$ (Case 3)}
		\If{a random number in \textsf{Bernoulli}($\PT$) is $1$ } 
		\State replace a randomly chosen edge in $\SR$ with $e^{(t- k\alpha)}$ 
		\EndIf
		\EndIf \label{alg:method:sample:end}
		\EndFor
		%
		%
		%
	\end{algorithmic}
\end{algorithm}

\subsubsection{Estimating Triangle Counts (Lines  \ref{alg:method:count:start}-\ref{alg:method:count:end} of Algorithm~\ref{alg:method})}
Let $\SGH=(\SVH,\SEH)$ be the sampled graph composed of the edges in $\SSS$ ($\SW$ or $\SR$ if $\SSS$ is divided), and let $\SN_{u}$ be the set of neighbors of node $u\in \SVH$ in $\SGH$.
We use $c$ and $c_{u}$ to denote the estimates of the global triangle count and the local triangle count of node $u$, respectively, in the stream so far.
That is, if we let $\CT$ and $\CT_{u}$ be $c$ and $c_{u}$ after processing $\ET$, they are the estimates of $|\STT|$ and $|\STT_{u}|$, respectively.

When each edge $\ET=(u,v)$ arrives, \method first finds the triangles composed of $(u,v)$ and two edges in $\SGH$.
The set of such triangles is $\{\triple:w\in\SN_{u}\cap\SN_{v} \} $.
For each triangle $\triple$, \method increases $c$, $c_{u}$, $c_{v}$, and $c_{w}$ by $1/\tprob$, where $\tprob$ is the probability that \method discovers $\triple$.
Then, the expected increase of the counters by each triangle $\triple$ becomes $1$, which makes the counters unbiased estimates, as shown in Theorem~\ref{thm:unbias} in the following section.

The only remaining task is to compute $\tprob$, the probability that \method discovers triangle $\triple$.
To this end, we divide the types of triangles depending on the arrival times of their edges, as in Definition~\ref{defn:type}.
Recall that $\tgen$ indicates the arrival time of the edge arriving $i$-th among the edges in $\triple$. 

\begin{definition}[Types of Triangles]\label{defn:type}
	Given the maximum number of samples $k$ and the relative size of the waiting room $\alpha$, the type of each triangle $\triple$ is defined as:
	{\small
	$$\type := \begin{cases}
	1 & \textnormal{if}\ \tmax \leq k + 1\\
	2 & \textnormal{else if}\ \tmax-\tmin \leq k\alpha \\
	3 & \textnormal{else if}\ \tmax-\tmed \leq k\alpha \\
	4 & \textnormal{otherwise}. \\
	\end{cases}
	$$}
\end{definition}
That is, a triangle has Type~1 if all its edges arrive early, Type~2 if its total interval is short, Type~3 if its closing interval is short, and Type~4 otherwise.
The probability that each triangle is discovered by \method (i.e., considered in line~ \ref{alg:method:count:start} of Algorithm~\ref{alg:method}) depends on its type, as formalized in Lemma~\ref{lemma:probability}.

\begin{lemma}[Triangle Discovering Probability.]\label{lemma:probability}
		Given the maximum number of samples $k$ and the relative size of the waiting room $\alpha$, the probability $\tprob$ that \method discovers each triangle $\triple$ is		
	{\small 
	\begin{equation}
		\tprob = \begin{cases} 
		1 & \textnormal{if}\ \type \leq 2\\
		\frac{k(1-\alpha)}{\tmax-1-k\alpha} & \textnormal{if}\ \type=3 \\
		\frac{k(1-\alpha)}{\tmax-1-k\alpha}\times \frac{k(1-\alpha)-1}{\tmax-2-k\alpha} & \textnormal{if}\ \type=4 \label{eq:tprob} \\
		\end{cases}
	\end{equation}
	}
\end{lemma}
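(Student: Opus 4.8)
The plan is to reduce the discovery probability to a membership question. Since the counting loop (lines~\ref{alg:method:count:start}--\ref{alg:method:count:end}) inspects $\SN_{u}\cap\SN_{v}$ \emph{before} the sampling step is applied to $\ET$, the triangle $\triple$ is discovered exactly once, namely when $\method$ processes its last edge $e^{(\tmax)}$, and only if the two earlier edges $e^{(\tmin)}$ and $e^{(\tmed)}$ both reside in $\SSS$ in the state left after $e^{(\tmax-1)}$ was processed. Hence $\tprob$ equals the probability that $e^{(\tmin)}$ and $e^{(\tmed)}$ are simultaneously present in $\SSS$ immediately before $e^{(\tmax)}$ is handled, and the whole proof comes down to computing this joint-presence probability.

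To do so I would invoke the storage characterization stated just above the lemma, instantiated at $t=\tmax$: after $e^{(\tmax-1)}$ is processed, if $\tmax\leq k+1$ every edge with arrival time below $\tmax$ is in $\SSS$ with probability $1$; and if $\tmax> k+1$, the $k\alpha$ most recent edges $\{e^{(\tmax-k\alpha)},\dots,e^{(\tmax-1)}\}$ sit in the waiting room $\SW$ with probability $1$, while each older edge in $\{e^{(1)},\dots,e^{(\tmax-k\alpha-1)}\}$ sits in the reservoir $\SR$ with probability $p^{(\tmax-1)}=k(1-\alpha)/(\tmax-1-k\alpha)$. The proof is then a case analysis that locates $e^{(\tmin)}$ and $e^{(\tmed)}$ relative to the waiting-room window $\{\tmax-k\alpha,\dots,\tmax-1\}$.

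For Types~1 and~2 both earlier edges are retained deterministically, giving $\tprob=1$: Type~1 ($\tmax\leq k+1$) retains all edges older than $\tmax$ by the first bullet above; Type~2 ($\tmax-\tmin\leq k\alpha$) forces $\tmin\geq\tmax-k\alpha$, and since $\tmed\geq\tmin$ both edges lie in the waiting-room window, hence in $\SW$ with probability $1$. For Type~3 the defining inequalities $\tmax-\tmed\leq k\alpha<\tmax-\tmin$ place $e^{(\tmed)}$ in $\SW$ (probability $1$) but push $e^{(\tmin)}\leq e^{(\tmax-k\alpha-1)}$ into the reservoir pool; because the waiting-room membership is certain it does not interact with the reservoir event, so $\tprob$ is just the single-edge reservoir inclusion probability $k(1-\alpha)/(\tmax-1-k\alpha)$, as claimed.

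The delicate case is Type~4, where both $\tmin$ and $\tmed$ fall strictly before the waiting-room window, so $\triple$ is discovered only if $e^{(\tmin)}$ and $e^{(\tmed)}$ are \emph{both} in $\SR$. The main obstacle here is that reservoir membership of distinct edges is negatively correlated, so one must not square the per-edge probability. Instead I would use the standard property that $\SR$ holds a uniformly random subset of size $m=k(1-\alpha)$ of the $n=\tmax-1-k\alpha$ candidate edges; the probability that two fixed edges are both kept is then $\frac{m}{n}\cdot\frac{m-1}{n-1}=\frac{k(1-\alpha)}{\tmax-1-k\alpha}\cdot\frac{k(1-\alpha)-1}{\tmax-2-k\alpha}$, which is exactly the Type~4 expression. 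Everything else is bookkeeping of the window boundaries, where the one point demanding care is the off-by-one between ``time $\tmax$'' and the reservoir probability $p^{(\tmax-1)}$ governing the state in which the discovery test is actually performed.
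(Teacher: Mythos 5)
Your proof is correct: reducing discovery to the joint presence of $e^{(\tmin)}$ and $e^{(\tmed)}$ in $\SSS$ immediately before $e^{(\tmax)}$ is processed, splitting cases by whether each falls in the deterministic waiting-room window $\{\tmax-k\alpha,\dots,\tmax-1\}$ or the reservoir pool, and invoking the pairwise-inclusion probability $\frac{m}{n}\cdot\frac{m-1}{n-1}$ of reservoir sampling (rather than squaring the marginal) for Type~4 is exactly the intended argument. The paper defers its proof to the supplementary material, but the storage characterization it states just before the lemma is precisely the tool you instantiate at $t=\tmax$, so your route matches the paper's.
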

\begin{proof}
	See Section~B.A of \cite{supple}.
\end{proof}

Notice that no additional space is required to store the arrival times of sampled edges.
This is because the type of each triangle $\triple$ and its discovering probability $\tprob$ can be computed from current time $t$ and whether each edge is stored in $\SW$ or $\SR$ at time $t$, as explained in the proof of Lemma~\ref{lemma:probability}.

\subsection{Accuracy Analyses}
\label{sec:method:analysis}

We analyze the bias and variance of the estimates provided by \method.
To this end, we define $\tcount$ as the increase in $\CT$ by triangle $\triple$.
By lines~\ref{alg:method:count:start}-\ref{alg:method:count:end} of Algorithm~\ref{alg:method},  $\tcount$ is $1/\tprob$ with its discovering probability $\tprob$, and $0$ with probability $1-\tprob$.
Based on this concept, the unbiasedness of the estimates given by \method is shown in Theorem~\ref{thm:unbias}.


\begin{theorem}[`Any time' unbiasedness of \method.]\label{thm:unbias} If $k(1-\alpha) \geq 2$,
	\method gives unbiased estimates of the global and local triangle counts at any time.
	That is, if we let $\CT$ and $\CT_{u}$ be $c$ and $c_{u}$ after processing $\ET$, respectively, the followings hold:
	{\small
	\begin{align}
	& \mathbb{E}[\CT]=|\STT|,\ \forall t\in\{1,2,...\}, \label{eq:unbias:global} \\
	& \mathbb{E}[\CT_{u}]=|\STT_{u}|,\ \forall u\in\SVT,\ \forall t\in\{1,2,...\}. \label{eq:unbias:local}
	\end{align}}
\end{theorem}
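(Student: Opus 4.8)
The plan is to express each estimate as a sum of per-triangle contributions and then apply linearity of expectation, reducing the entire claim to the single fact that each triangle contributes $1$ in expectation. First I would fix an arbitrary time $t$ and observe, from lines~\ref{alg:method:count:start}--\ref{alg:method:count:end} of Algorithm~\ref{alg:method}, that the global counter decomposes as $\CT=\sum_{\triple\in\STT}\tcount$. The key structural point justifying this is that a triangle $\triple$ can be discovered only once, namely at the moment its last edge (the one arriving at time $\tmax$) is processed: only then are its two older edges already present, so only then can $w$ appear in $\SN_{u}\cap\SN_{v}$. A triangle belongs to $\STT$ iff $\tmax\leq t$, in which case its contribution $\tcount$ has already been determined and equals $1/\tprob$ if discovered and $0$ otherwise; triangles with $\tmax>t$ have contributed nothing. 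Hence the counter is exactly the stated sum, with each $\tcount$ a well-defined random variable.

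Next I would compute $\mathbb{E}[\tcount]$ for a single triangle. By Lemma~\ref{lemma:probability}, the probability that \method discovers $\triple$ is exactly $\tprob$, so
\begin{equation*}
\mathbb{E}[\tcount]=\frac{1}{\tprob}\cdot\tprob+0\cdot(1-\tprob)=1,
\end{equation*}
provided $\tprob>0$. Here the hypothesis $k(1-\alpha)\geq 2$ does the real work: for Types~1, 2, and 3 the probability in \eqref{eq:tprob} is manifestly positive, while for Type~4 it equals $\tfrac{k(1-\alpha)}{\tmax-1-k\alpha}\cdot\tfrac{k(1-\alpha)-1}{\tmax-2-k\alpha}$, whose numerator $k(1-\alpha)-1$ is positive precisely because the reservoir holds at least two edges. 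This is exactly what guarantees that both older edges of a Type~4 triangle can simultaneously reside in $\SR$, so that $1/\tprob$ is finite and the expectation above is well-posed.

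I would then finish the global case by linearity of expectation,
\begin{equation*}
\mathbb{E}[\CT]=\sum_{\triple\in\STT}\mathbb{E}[\tcount]=\sum_{\triple\in\STT}1=|\STT|,
\end{equation*}
which is \eqref{eq:unbias:global}. For the local counts I would repeat the argument verbatim with $c_{u}$ in place of $c$: since $c_{u}$ is incremented by $1/\tprob$ exactly when a triangle containing $u$ is discovered, we have $\CT_{u}=\sum_{\triple\in\STT_{u}}\tcount$, and the identical per-triangle expectation yields $\mathbb{E}[\CT_{u}]=|\STT_{u}|$, establishing \eqref{eq:unbias:local}. Since $t$ was arbitrary, both identities hold at every time step, which is the ``any time'' guarantee.

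The main obstacle I anticipate is not the expectation computation, which is immediate, but justifying the decomposition $\CT=\sum_{\triple\in\STT}\tcount$ cleanly: one must argue that every triangle currently in $\STT$ is accounted for exactly once and with precisely the distribution claimed for $\tcount$. This in turn rests entirely on Lemma~\ref{lemma:probability} correctly identifying the discovering probability for each of the four types in Definition~\ref{defn:type}, so the genuine content of the theorem is deferred to that lemma; the present argument is essentially the linearity-of-expectation wrapper around it.
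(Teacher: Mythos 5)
Your proof is correct and follows essentially the same route as the paper: the main text already defines $\tcount$ as the per-triangle increment, notes that it equals $1/\tprob$ with probability $\tprob$ and $0$ otherwise via Lemma~\ref{lemma:probability}, and the deferred proof is precisely the linearity-of-expectation argument you give, with the genuine content residing in the discovering-probability lemma. Your observation that $k(1-\alpha)\geq 2$ is what keeps $\tprob>0$ for Type~4 triangles correctly pinpoints the role of that hypothesis.
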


\begin{proof}
See Section~B.B of \cite{supple}.
\end{proof}


In our variance analysis, to give a simple intuition, we focuses on
\small
\begin{align} \tilde{\mathrm{{Var}}}[\CT] & =\sum\nolimits_{\triple\in\STT}\mathrm{Var}[\tcount], \label{eq:var:global} \\
\tilde{\mathrm{{Var}}}[\CT_{u}] & =\sum\nolimits_{\triple\in\STT_{u}}\mathrm{Var}[\tcount], \label{eq:var:local}
\end{align}
\normalsize
which are the variances when the dependencies in $\{\tcount\}_{\triple\in \STT}$ are ignored.
Specifically, we show how the temporal locality, explained in Section~\ref{sec:pattern}, is related to reducing $\tilde{\mathrm{{Var}}}[\CT]$ and $\tilde{\mathrm{{Var}}}[\CT_{u}]$.

From $\mathrm{Var}[\tcount]=\mathbb{E}[\tcount^{2}]-\left(\mathbb{E}[\tcount]\right)^{2}$,
we have	$\mathrm{Var}[\tcount]=(1/\tprob)-1$.
From Eq.~\eqref{eq:tprob}, if $k(1-\alpha) \geq 2$,
\small
\begin{multline}
\mathrm{Var}[\tcount] = \\
\begin{cases}
	0 & \textnormal{if}\ \type \leq  2\\
	\frac{\tmax-1-k\alpha}{k(1-\alpha)} -1 & \textnormal{if}\ \type=3 \\
	\frac{\tmax-1-k\alpha}{k(1-\alpha)}\times \frac{\tmax-2-k\alpha}{k(1-\alpha)-1} -1 & \textnormal{if}\ \type=4.
\end{cases}
\label{eq:var:individual} 
\end{multline} 
\normalsize Compared to \triest \cite{de2016tri}, where $\mathrm{Var}[\tcount]=0$ if $\type=1$ and $\mathrm{Var}[\tcount]=\frac{\tmax-1}{k}\times \frac{\tmax-2}{k-1}-1$ otherwise, \method reduces the variance regarding the triangles of Type~2 or 3, as formalized in Lemma~\ref{lemma:variance:type}, while \method increases the variance regarding the triangles of Type~4.
\begin{lemma}[Comparison of Variances] \label{lemma:variance:type}
	For each triangle $\triple$, $\mathrm{Var}[\tcount]$ is smaller in \method than in \triest \cite{de2016tri}, i.e., 
	\vspace{-1.5mm}
	\small
		\begin{equation}
		\mathrm{Var}[\tcount] < \frac{\tmax-1}{k}\times \frac{\tmax-2}{k-1}-1 \label{eq:var:comparison}
		\end{equation}
	\normalsize
	if {\bf any} of the following conditions are satisfied:
	\small
	\bit
		\item $\type = 2$
		\item $\type = 3$ and $\tmax > 1+\frac{\alpha}{1-\alpha}k$
		\item $\type = 3$ and $\alpha < 0.5$.
	\eit
	\normalfont
\end{lemma}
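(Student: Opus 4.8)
The plan is to collapse all three cases into one clean sign analysis of a quadratic. A useful guiding observation is that in both algorithms $\mathrm{Var}[\tcount]=\tfrac{1}{\tprob}-1$, so the inequality \eqref{eq:var:comparison} is equivalent to showing that \method discovers the triangle with strictly larger probability than \triest; I would nevertheless carry out the comparison directly on the variance expressions in \eqref{eq:var:individual}. First, note that the right-hand side of \eqref{eq:var:comparison} is exactly the variance \triest incurs for any non-Type-1 triangle, and recall that a triangle is of Type~1 precisely when $\tmax\leq k+1$. Each of the three hypotheses excludes Type~1, forcing $\tmax\geq k+2$, whence $\frac{\tmax-1}{k}\geq\frac{k+1}{k}>1$ and $\frac{\tmax-2}{k-1}\geq\frac{k}{k-1}>1$; their product exceeds $1$, so the right-hand side of \eqref{eq:var:comparison} is strictly positive. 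This disposes of the case $\type=2$ at once, since there the left-hand side is $0$ by \eqref{eq:var:individual}.

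For the two Type~3 conditions I would write $\mathrm{Var}[\tcount]=\frac{\tmax-1-k\alpha}{k(1-\alpha)}-1$ from \eqref{eq:var:individual} and cancel the common $-1$ on both sides, so that the claim becomes $\frac{\tmax-1-k\alpha}{k(1-\alpha)}<\frac{(\tmax-1)(\tmax-2)}{k(k-1)}$. Both denominators are positive (indeed $\tmax-1-k\alpha>k(1-\alpha)>0$ because $\tmax\geq k+2$), so cross-multiplying by the positive quantity $k(k-1)(1-\alpha)$ reduces the goal to proving $f(n)>0$, where $n:=\tmax-1\geq k+1$ and $f(n):=(1-\alpha)(\tmax-1)(\tmax-2)-(k-1)(\tmax-1-k\alpha)$. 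Expanding with $\tmax-2=n-1$ and $\tmax-1-k\alpha=n-k\alpha$ gives the quadratic $f(n)=(1-\alpha)n^{2}-(k-\alpha)n+k(k-1)\alpha$.

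The step that makes everything collapse is noticing that $n=k$ is a root of $f$: substituting $n=k$ yields $0$ after cancellation. Hence $f$ factors as $f(n)=(1-\alpha)(n-k)\bigl(n-r\bigr)$, and Vieta's product of roots identifies the second root as $r=\frac{\alpha(k-1)}{1-\alpha}$. Since Type~3 excludes Type~1 we always have $n=\tmax-1\geq k+1>k$, so the first factor is strictly positive and the whole problem reduces to verifying $n>r$. Under the hypothesis $\tmax>1+\frac{\alpha}{1-\alpha}k$ this is immediate, as $n>\frac{\alpha k}{1-\alpha}>\frac{\alpha(k-1)}{1-\alpha}=r$. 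Under the hypothesis $\alpha<0.5$ I would instead show $r<k$: the inequality $\frac{\alpha(k-1)}{1-\alpha}<k$ rearranges to $\alpha<\frac{k}{2k-1}$, and since $\frac{k}{2k-1}>\tfrac12$ for every $k\geq1$, the bound $\alpha<0.5$ suffices, giving $r<k<n$. In either case both factors of $f$ are positive, so $f(n)>0$ and \eqref{eq:var:comparison} follows.

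The only genuine obstacle is spotting the factorization — once the root $n=k$ is identified the remainder is a routine sign check — so the care required is mainly bookkeeping: I would confirm that $k(1-\alpha)$, $\tmax-1-k\alpha$, and $k-1$ are all positive before each cross-multiplication, to ensure no inequality reverses. Everything else is elementary algebra on a single quadratic.
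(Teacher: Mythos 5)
Your proof is correct. The key computation checks out: for $\type=2$ the left side of \eqref{eq:var:comparison} is $0$ while excluding Type~1 forces $\tmax\geq k+2$, making the right side strictly positive; and for $\type=3$ the reduction to $f(n)=(1-\alpha)n^{2}-(k-\alpha)n+k(k-1)\alpha>0$ with $n=\tmax-1$ is algebraically sound, $n=k$ is indeed a root, Vieta gives the second root $r=\frac{\alpha(k-1)}{1-\alpha}$, and each of the two Type~3 hypotheses does imply $n>r$ (the only cosmetic slip is that the chain $n>\frac{\alpha k}{1-\alpha}>\frac{\alpha(k-1)}{1-\alpha}$ has a non-strict middle step when $\alpha=0$, where the conclusion still holds trivially). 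You also correctly flag the positivity checks ($k\geq 2$, $\alpha<1$, $\tmax-1-k\alpha>0$) needed before cross-multiplying; these follow from the standing assumption $k(1-\alpha)\geq 2$ under which \eqref{eq:var:individual} is derived. The paper relegates its own proof to supplementary material not reproduced here, so an exact comparison is not possible; the most natural direct argument would handle the two Type~3 conditions by separate inequality manipulations, whereas your factorization of the single quadratic unifies them and makes transparent exactly when the comparison with \triest can fail (namely $r<n$ failing, i.e., $\alpha$ large and $\tmax$ barely above $k+1$), which is a small but genuine gain in insight.
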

\begin{proof}
	See Section~B.C of \cite{supple}.
\end{proof}
Therefore, the superiority of \method in terms of small $\tilde{\mathrm{{Var}}}[\CT]$ and $\tilde{\mathrm{{Var}}}[\CT_{u}]$ depends on the distribution of the types of triangles in real graph streams.
In the experiment section, we show that the triangles of Type 2 or 3 are abundant enough in real graph streams, as suggested by the temporal locality, so that \method is more accurate than \triest.

\section{Experiments}
\label{sec:experiments}
We designed experiments to answer the following questions:
\bit
	\item \textbf{Q1. Accuracy}: How accurately does \method estimate global and local triangle counts?
	\item \textbf{Q2. Illustration of Theorems}: Does \method give unbiased estimates with variances smaller than its competitors'?
	\item \textbf{Q3. Scalability}: How does \method scale with the number of edges in input streams?
\eit

\subsection{Experimental Settings}

\noindent{\bf Machine:} We ran all experiments on a PC with a 3.60GHz Intel i7-4790 CPU and 32GB memory.

\noindent{\bf Data:} The real graph streams used in our experiments are summarized in Table~\ref{tab:data:real}. See Section~D of \cite{supple} for the descriptions of them.
In all the streams, edges were streamed in the order by which they are created.

\begin{table}[t]
	\vspace{-2mm}
	\centering
	\caption{\label{tab:data:real} Summary of real-world graph streams.}
	\begin{tabular}{c|c|c|c}
		\toprule
		{\bf Name} & {\bf \# Nodes} & {\bf \# Edges} & {\bf Summary} \\
		\midrule
		\textbf{ArXiv} & $30,565$ & $346,849$ & Citation network \\
		\textbf{Facebook} & $61,096$ & $614,797$ & Friendship network \\
		\textbf{Email} & $86,978$ & $297,456$ & Email network \\
		\textbf{Youtube} & $3,181,831$ & $7,505,218$ & Friendship network \\ 
		\textbf{Patent} & $3,774,768$ & $16,518,947$ & Citation network \\
		\bottomrule
	\end{tabular}
\end{table}

\begin{figure*}
	\vspace{-3mm}
	\centering
	\begin{minipage}{.71\textwidth}
		\centering
		\includegraphics[width=0.65\linewidth]{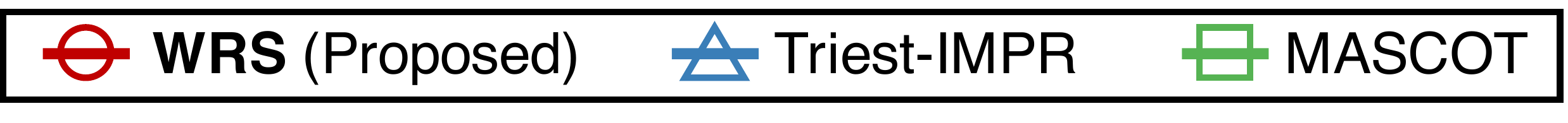} \\
		\vspace{-2mm}
		\subfigure[Patent (Local)]{
			\includegraphics[width=0.21\linewidth]{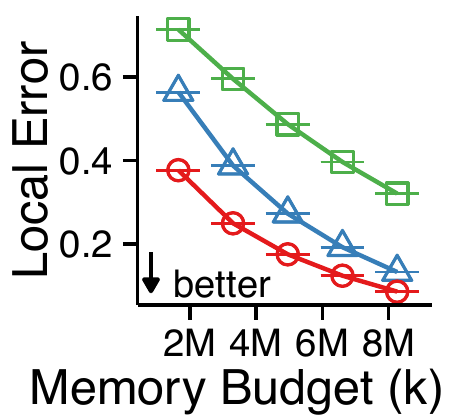}
		}
		\subfigure[Email (Local)]{
			\includegraphics[width=0.21\linewidth]{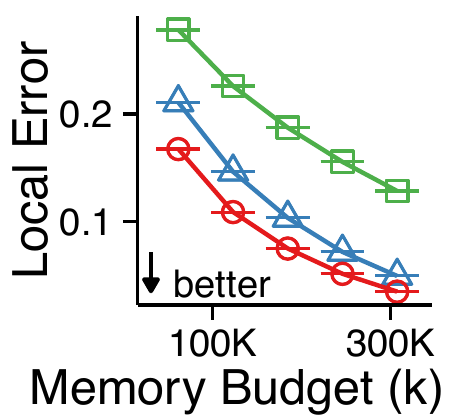}
		}
		\subfigure[Facebook (Local)]{
			\includegraphics[width=0.21\linewidth]{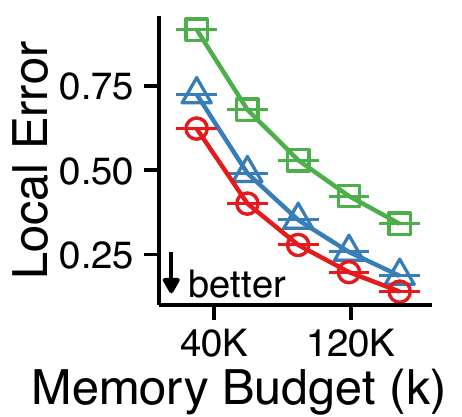}
		}
		\subfigure[Youtube (Local)]{
			\includegraphics[width=0.21\linewidth]{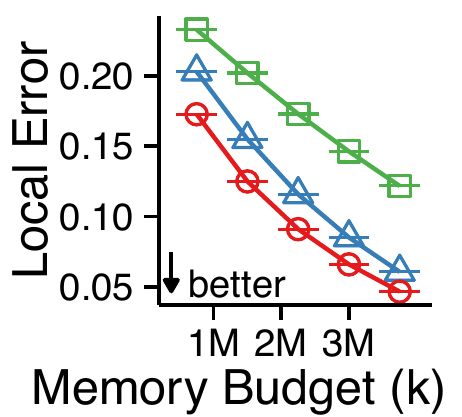}
		} \\ \vspace{-3mm}
		\subfigure[Patent (Global)]{
			\includegraphics[width=0.21\linewidth]{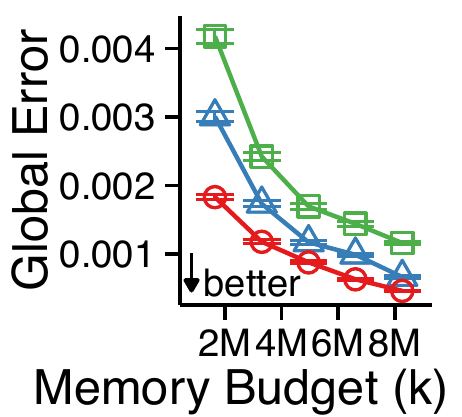}
		}
		\subfigure[Email (Global)]{
			\includegraphics[width=0.21\linewidth]{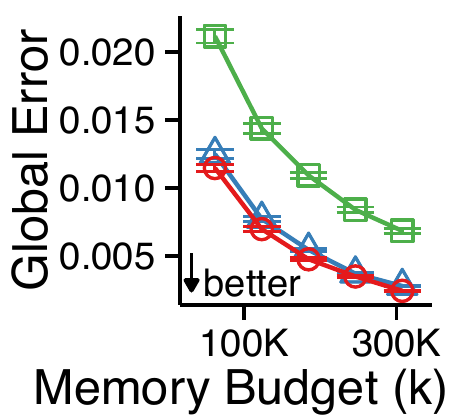}
		}
		\subfigure[Facebook (Global)]{
			\includegraphics[width=0.21\linewidth]{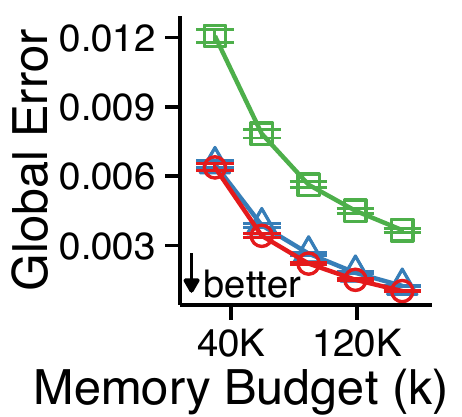}
		}
		\subfigure[Youtube (Global)]{
			\includegraphics[width=0.21\linewidth]{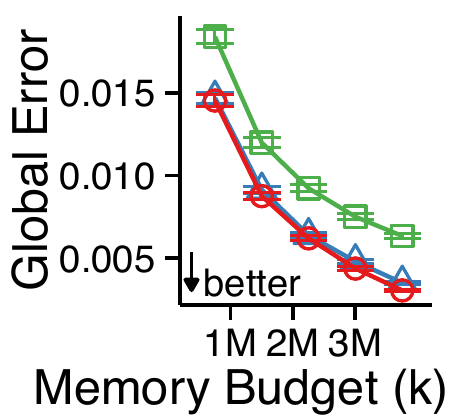}
		} \\
		\vspace{-2mm}
		\caption{\label{fig:accuracy}
			\underline{\smash{\method is accurate.}} M: million, K: thousand. In all the datasets, \method is most accurate in global and local triangle counting regardless of memory budget $k$. 
		}
	\end{minipage}
	\hfill
	\begin{minipage}{.25\textwidth}
	\includegraphics[width=1\linewidth]{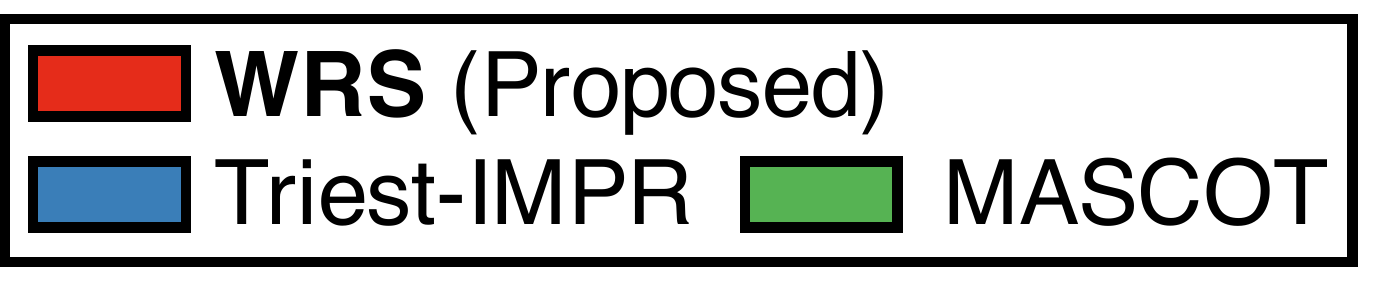} \\
	\includegraphics[width=1\linewidth]{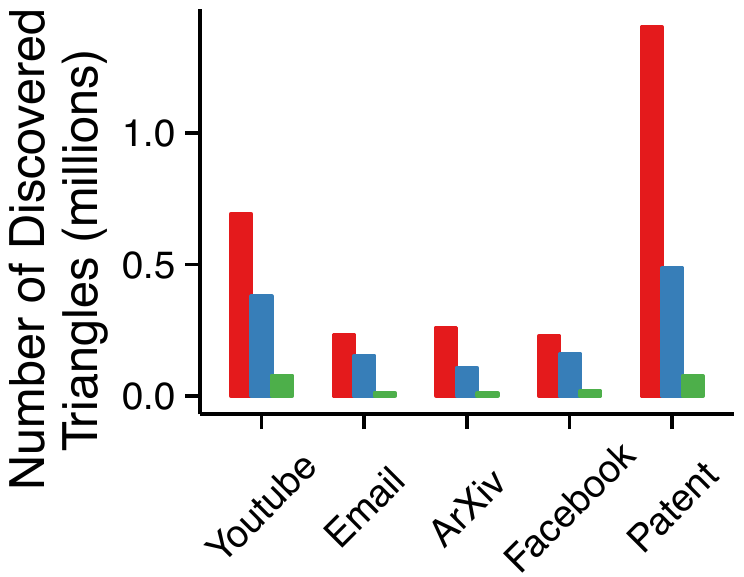} \\
	\vspace{-5.5mm}
	\caption{\label{fig:count}
		\underline{\smash{The sampling scheme}} \underline{\smash{of \method is effective.}} \method discovers up to $2.9\times$ more triangles than the second best method, in the same streams. $k$ is set to $10\%$ of the number of the edges in each dataset.
	}	
	\end{minipage}
\end{figure*}

\noindent{\bf Implementations:} 
We compared \method to \triest \cite{de2016tri} and \mascot \cite{lim2015mascot}, which are single-pass streaming algorithms estimating both global and local triangle counts with a limited memory budget.
We implemented all the methods in Java, and in all of them, we stored sampled edges in the adjacency list format.
In \method, the relative size $\alpha$ of the waiting room was set to $0.1$ unless otherwise stated (see Section E.B of \cite{supple} for the effect of $\alpha$ on the accuracy).

\noindent{\bf Evaluation measures:} 
To measure the accuracy of global and local triangle counting, we used the following metrics:
\bit
	\item {\bf Global Error}:
	Let $x$ be the number of the global triangles at the end of the input stream and $\hat{x}$ be an estimated value of $x$. Then, the global error is $|x-\hat{x}|/(x+1)$. 
	\item {\bf Local Error} \cite{lim2015mascot}:
	Let $x_{u}$ be the number of the local triangles of each node $u\in \SV$ at the end of the input stream and $\hat{x}_{u}$ be its estimate. Then, the local error is
	{\small
		\begin{equation*}
		\frac{1}{|\SV|}\sum\nolimits_{u\in \SV} \frac{|x_{u}-\hat{x}_{u}|}{x_{u}+1}.
		\end{equation*}
	}
\eit

%
%

\subsection{Q1. Accuracy} \label{sec:exp:accuracy}

Figures~\ref{fig:crown}(b) and \ref{fig:accuracy} show the accuracies of the considered methods in the real graph streams with different memory budgets $k$.
Each evaluation metric was computed $1,000$ times for each method, and the average was reported with an error bar indicating the estimated standard error.
In all the datasets, \method was most accurate in global and local triangle counting, regardless of memory budgets.
The accuracy gain was especially high in the ArXiv and Patent datasets, which showed the strongest temporal locality.
In the ArXiv dataset, for example, \method gave up to {\em 47\% smaller local error} and {\em 40\% smaller global error} than the second best method.

\method was more accurate since its estimates were based on more triangles. Due to its effective sampling scheme, \method discovered up to $2.9\times$ more triangles than its competitors while processing the same streams, as shown in Figure~\ref{fig:count}.

\subsection{Q2. Illustration of Theorems} \label{sec:exp:illustration}

We ran experiments illustrating our analyses in Section~\ref{sec:method:analysis}.
Figure~\ref{fig:crown}(c) shows the distributions of $10,000$ estimates of the global triangle count in the ArXiv dataset obtained by each method. We set $k$ to the $10\%$ of the number of edges in the dataset. 
The average of the estimates of \method was close to the true triangle count.
Moreover, the estimates of \method had smaller variance than those of the competitors.
These results are consistent with Theorem~\ref{thm:unbias} and Lemma~\ref{lemma:variance:type}.

\subsection{Q3. Scalability} \label{sec:exp:scalability}
We measured how the running time of \method scales with the number of edges in input streams.
To measure the scalability independently of the speed of input streams, we measured the time taken by \method to process all the edges ignoring the time taken to wait for the arrivals of edges.
Figure~\ref{fig:crown}(a) shows the results in graph streams that we created by sampling different numbers of edges in the ArXiv dataset.
The running time of \method scaled linearly with the number of edges.
That is, the time taken by \method to process each edge was almost constant regardless of the number of edges arriving so far.

\section{Conclusion}
\label{sec:conclusion}

We propose \method, a single-pass streaming algorithm for global and local triangle counting.
\method divides the memory space into the waiting room, where the latest edges are stored, and the reservoir, where the remaining edges are uniformly sampled.
By doing so, \method exploits the temporal locality in real dynamic graph streams, while giving unbiased estimates.
Specifically, \method has the following strengths:
\begin{itemize*}
\item {\bf Fast and `any time'}: \method scales linearly with the number of edges in input graph streams, and it gives estimates at any time while input streams grow (Figure~\ref{fig:crown}(a)).
\item {\bf Effective}: estimation error in \method is up to {\em $47\%$ smaller} than those in the best competitors (Figures~\ref{fig:crown}(b) and \ref{fig:accuracy}).
\item {\bf Theoretically sound}: \method gives unbiased estimates with small variances under the temporal locality (Theorem~\ref{thm:unbias}, Lemma~\ref{lemma:variance:type} and Figure~\ref{fig:crown}(c)).
\end{itemize*}
{\bf Reproducibility}: The code and data
we used in the paper are available at \textit{\url{http://www.cs.cmu.edu/~kijungs/codes/wrs/}}.

\vspace{1mm}
{\footnotesize 
\noindent{\bf Acknowledgments.}
We thank Prof. Christos Faloutsos and Mr. Jisu Kim from Carnegie Mellon University for fruitful discussions.
This material is based upon work
   supported by the National Science Foundation
   under Grant No.
   CNS-1314632 
   and IIS-1408924. 
   Research was sponsored by the Army Research Laboratory 
   and was accomplished under Cooperative Agreement Number W911NF-09-2-0053. 
   Kijung Shin was supported by KFAS Scholarship.
   Any opinions, findings, and conclusions or recommendations expressed in this
   	material are those of the author(s) and do not necessarily reflect the views
   	of the National Science Foundation, or other funding parties.
   	The U.S. Government is authorized to reproduce and 
   	distribute reprints for Government purposes notwithstanding 
   	any copyright notation here on.

}

\balance
\bibliographystyle{IEEEtran}
\bibliography{BIB/kijung}


\end{document}